\tikzset{->-/.style={decoration={markings,mark=at position .5 with {\arrow{>}}},postaction={decorate}}}
\let\coloneqq\coloneq
\let\emptyset\varnothing
\let\epsilon\varepsilon
\let\rho\varrho
\let\theta\vartheta
\let\phi\varphi
\DeclareMathOperator{\contract}{/}
\DeclareMathOperator{\delete}{\backslash}
\DeclareMathOperator{\rank}{\rho}
\definecolor{rwthblue}{cmyk}{1, .5, 0, 0}
\definecolor{rwthlightblue}{cmyk}{.45, .14, 0, 0}
\definecolor{rwthred}{cmyk}{.15, 1, 1, 0}
\definecolor{rwthgreen}{cmyk}{.7, 0, 1, 0}
\definecolor{rwthorange}{cmyk}{0, .4, 1, 0}
\definecolor{rwthmagenta}{cmyk}{0, 1, .25, 0}
\definecolor{rwthyellow}{cmyk}{0, 0, 1, 0}
\definecolor{rwthpetrol}{cmyk}{1, .3, .5, .3}
\definecolor{rwthturquoise}{cmyk}{1, 0, .4, 0}
\definecolor{rwthmay}{cmyk}{.35, 0, 1, 0}
\definecolor{rwthbordeaux}{cmyk}{.25, 1, .7, .2}
\definecolor{rwthviolet}{cmyk}{.7, 1, .35, .15}
\definecolor{rwthpurple}{cmyk}{.6, .6, 0, 0}
\tikzset{vertex/.style={draw,circle,inner sep=0pt,minimum size=5pt}}
\tikzset{labeled vertex/.style={draw,circle,inner sep=0pt,minimum size=15pt}}
\tikzset{hidden/.style={draw=none,circle,inner sep=0pt,minimum size=5pt}}
\tikzset{label/.style={font=\scriptsize}}
\tikzset{player1/.style={black,solid}}
\tikzset{player2/.style={rwthred,dashed}}
\tikzset{player3/.style={rwthblue,dotted,thick}}
\tikzset{player4/.style={rwthgreen,dashdotted}}
\tikzset{monopsony/.style={#1,very thick,rounded corners}}
\newcommand{\I}{\ensuremath{\mathcal{I}}}
\newcommand{\B}{\ensuremath{\mathcal{B}}}
\newcommand{\C}{\ensuremath{\mathcal{C}}}
\newcommand{\Z}{\ensuremath{\mathbb{Z}}}
\newcommand{\hp}{\ensuremath{\hat{p}}}
\newcommand{\hB}{\ensuremath{\hat{B}}}
\newcommand{\Mnat}{\ensuremath{\mathrm{M}^\natural}}
\theoremstyle{definition}
\newtheorem{definition}{Definition}
\newtheorem*{example*}{Example}
\theoremstyle{plain}
\newtheorem{theorem}[definition]{Theorem}
\title{A Simplified Analysis of the Ascending Auction to Sell a Matroid Base}
\date{November 25, 2024}
\author{
	Britta Peis\thanks{School of Business and Economics, RWTH Aachen University}\\
	britta.peis@oms.rwth-aachen.de
	\and
	Niklas Rieken\footnotemark[1]\\
	niklas.rieken@oms.rwth-aachen.de
}
\begin{document}
\maketitle
\begin{abstract}
	We give a simpler analysis of the ascending auction of Bikhchandani, de Vries, Schummer, and Vohra \cite{bikhchandani2011ascending} to sell a welfare-maximizing base of a matroid at Vickrey prices.
	The new proofs for economic efficiency and the charge of Vickrey prices only require a few matroid folklore theorems, therefore shortening the analysis of the design goals of the auction significantly.
\end{abstract}

\section{Introduction}\label{sec:intro}
We consider a market consisting of a set of $m$ indivisible distinguishable items $E$ and a set of $n$ buyers $N$, where each buyer $i \in N$ is interested in a subset $E_i
\subseteq E$.
Each buyer $i \in N$ has a valuation $v_i(e) \in \Z_+$ for every $e \in E_i$ which may be interpreted as buyer $i$'s maximum willingness to pay for item $e$.
We may also write $v_i(X) = \sum_{e \in X} v_i(e)$ for any subset $X \subseteq E_i$ to denote the additive valuation of any buyer.
We consider the particular case where the auctioneer is constrained to sell a base of a matroid (see Section~\ref{sec:matroids} for basics on matroid theory).

This model has been studied by Bikhchandani, de Vries, Schummer, and Vohra in the paper ``An Ascending Vickrey Auction for Selling Bases of Matroids'' presented at SODA in 2008 \cite{bikhchandani2008ascending} and published in the journal OR in 2011 \cite{bikhchandani2011ascending}.
In the sequel, we only reference the journal version.
Bikhchandani et al.\ present an ascending auction for selling a base of a matroid, and prove that the auction fulfills the following three important mechanism design goals:
\begin{enumerate*}[label=(\roman*)]
	\item signaling truthfully is an ex-post equilibrium,\label{goals:eq}
	\item the truthful equilibrium results in an efficient (i.e.\ welfare-maximizing) outcome, and \label{goals:opt}
	\item the auction runs in strongly polynomial time.\label{goals:poly}
\end{enumerate*}
The main building block for showing that the ascending auction (described in Section~\ref{sec:auction} along with some basics on auction theory) fulfills properties \ref{goals:eq}-\ref{goals:poly} is Theorem~13 in \cite{bikhchandani2011ascending} (or Theorem~\ref{thm:main-theorem} below).
It states that the auction computes in strongly polynomial time a welfare-maximizing base (which yields \ref{goals:opt} and \ref{goals:poly} immediately) and charges Vickrey prices (which is the key for proving \ref{goals:eq}).
This paper provides a simpler proof of the theorem using only some matroid folklore theorems.

While at first sight the restriction to auctions in which the auctioneer is constrained to sell a base of a matroid seems to be an abstract model in the mechanism design literature, it turns out to contain various interesting special cases.
As described in \cite{bikhchandani2011ascending}, examples include scheduling matroids (cf. \cite{demange1986multi}), the allocation of homogeneous objects (cf. \cite{ausubel2004efficient}), pairwise kidney exchange (cf. \cite{roth2005pairwise}), spatially distributed markets (cf. \cite{babaioff2004mechanisms}), bandwidth markets (cf. \cite{tse1998multiaccess}), and multi-class queueing systems (cf. \cite{shanthikumar1992multiclass}).
Some of the examples require a generalization of the model to the setting where the auctioneer is constrained to sell a base of an integral polymatroid, and the buyers $i \in N$ have non-decreasing, concave, and piece-wise-linear valuations with integral breakpoints.
However, as also observed in \cite{bikhchandani2011ascending}, this more general model can easily be reduced to our setting via Helgason's pseudo-polynomial reduction (cf. \cite{helgason1974aspects}).
More recently, an extension of this auction that sells a base of a polymatroid which runs in polynomial time has been published in Raach's PhD thesis \cite{raach2023implementing}.
Its analysis is still grounded on the original work in \cite{bikhchandani2011ascending}.
In particular, the analysis involves Theorem~\ref{thm:main-theorem} to show that the updated auction computes VCG prices and its truthfulness also follows from the original paper.

It is well-known that there exists a \emph{sealed-bid} auction for our model which fulfills the three mechanism design goals \ref{goals:eq}-\ref{goals:poly}, namely the VCG mechanism \cite{vickrey1961counterspeculation,clarke1971multipart,groves1973incentives}.
The VCG mechanism operates by first asking the buyers for their valuations for the items.
Based on the reported valuations, which may or may not coincide with the true valuations, the VCG mechanism computes a welfare maximizing base, and charges Vickrey prices to every buyer.
Due to the fact that the auctioneer is constrained to sell a base of a matroid, the computation of the welfare maximizing base and the Vickrey prices can be done in strongly polynomial time with the matroid greedy algorithm.
There are, however, various reasons for preferring an ascending auction over a sealed-bid auction.
 
In general, in an ascending auction, the auctioneer announces prices, and the buyers report their demands at announced prices.
The auction ends once the demands are feasibly satisfied.
Otherwise, the auctioneer increases the prices.
The here presented auction follows this outline and we describe it more precisely in Section~\ref{sec:auction}.
Typically ascending auctions have the drawback that they are slower than sealed-bid auctions (since they require multiple rounds of communication as bid elicitation process).
However, in many settings ascending auctions are still preferred over sealed-bid auctions as they are more transparent, require less information to be revealed by the buyers, and less bits of communication.
We briefly elaborate on these aspects in Appendix~\ref{apx:tpc}.
For an extended discussion of advantages of ascending auctions over sealed-bid auctions see \cite{ausubel2004efficient} and \cite{cramton1998ascending}.

\subsection{Related Literature}
Since there is an extensive literature on ascending auctions, we restrict in this paper only to those references that are closest to our model and our results, and refer to the references in e.g. Chapter~11 in \cite{AGTbook} or \cite{parkes2006iterative,blumrosen2005computational} for a broader overview on the theory of ascending auctions.

Matroids play an important role in mechanism design theory, and particularly in the analysis of ascending auctions.
To mention a prominent example, consider the general setting of auctions where buyers are interested in buying subsets (\emph{bundles}) of items and have private valuations over the bundles.
Walrasian equilibrium prices (cf. \cite{walras1874}) are guaranteed to exist if the valuation functions satisfy the \emph{gross substitutes} condition \cite{kelso1982job,gul1999walrasian}.
In fact, by the Maximal Domain Theorem of Gul and Stacchetti \cite{gul1999walrasian}, gross substitute valuations are the largest class containing unit-demand valuations that guarantee existence of a Walrasian equilibrium.
The gross substitutes condition is closely related to matroids in the sense that the gross substitutes condition is equivalent to \Mnat-concavity \cite{fujishige2003note,murota2013computing}, implying that the inclusion-wise minimal preferred bundles form the base set of a matroid for any given price vector (see e.g. \cite{gul2000english}).
The rich theory of matroids, polymatroids, and discrete convexity turned out to be extremely helpful for understanding the computational complexity of iterative auctions with gross substitutes valuations (see e.g. the survey \cite{leme2017gross} and \cite{murota2013computing,murota2016time,eickhoff2023faster}).

The preferences in our model also satisfy the gross substitutes condition, so that our model fits into the frameworks analyzed in \cite{ausubel2006efficient,parkes2000iterative,ausubel2002ascending,devries2007ascending,mishra2007ascending}.
In all of these frameworks, efficient outcomes in equilibrium are achieved.
However, these more general frameworks rely on an exponential (in the number of items) number of the price-raising steps.
In contrast, the ascending auction for our model runs in strongly polynomial time without the necessity to run several dummy auctions (like in \cite{ausubel2006efficient}), and without relying on a proxy bidding scheme (like in \cite{ausubel2002ascending} and \cite{parkes2000preventing}).
See \cite{bikhchandani2011ascending} for a more detailed comparison with the above mentioned frameworks and also \cite{raach2023implementing} for the extension to the polymatroid case.

\subsection{Contribution and Overview}
We provide a simplified proof for showing that the ascending auction described in \cite{bikhchandani2011ascending} returns a welfare-maximizing base and charges Vickrey prices.
The original proof requires a very microscopic analysis of the auction:
In every step, one has to keep track of deleted items and cocircuits that act as a witness for price-setting items, building an auction history, which is called the \emph{VCG sequence}.
In contrast, our proof does not require this step-by-step view of the auction and instead only uses three matroid folklore lemmas which reduces the length of the proof significantly, is less technical and less index-heavy.
The paper as a whole, however, is still self-contained: We introduce all terminology needed from matroid theory in Section~\ref{sec:matroids}, give a slightly amended description of the auction with examples in Section~\ref{sec:auction}, and the aforementioned simplified analysis in Section~\ref{sec:analysis}.

\section{Basics on Matroid Theory}\label{sec:matroids}
We consider a finite ground set $E$, which in the sequel will be the set of items.
For the sake of convenience, we may write $X+e \coloneqq X \cup \{e\}$, and $X-e \coloneqq X \setminus \{e\}$, for any subset $X \subseteq E$ and any element $e \in E$.
Moreover, given a function $w\colon E \to \mathbb{R}$, we may write $w(X) \coloneqq \sum_{e \in X} w(e)$.

This section contains some basic definitions and facts from the theory of matroids, which originated by the work of Whitney \cite{whitney1935abstract}.
A modern textbook on matroid theory is the book by Oxley~\cite{oxley2006matroid}.
Our notation is mainly based on Chapter~39 in the book by Schrijver~\cite{schrijver2003combinatorial}.
Readers familiar with matroid theory are encouraged to skip this section.

We say a pair $M = (E, \I)$ consisting of a finite \emph{ground set} $E$ and a family of \emph{independent sets} $\I \subseteq 2^E$ is a \emph{matroid} if 
\begin{enumerate*}[label=(I\arabic*)]
	\item $\emptyset \in \I$,\label{M:existence}
	\item for all $I_1 \in \I$ and $I_2 \subseteq I_1$, it holds that $I_2 \in \I$, and\label{M:hereditary}
	\item for all $I_1, I_2 \in \I$ with $|I_1| > |I_2|$, there exists $e \in I_1 \setminus I_2$, such that $I_2+e \in \I$.\label{M:augmentation}
\end{enumerate*}

An inclusion-wise maximal independent subset of $E$ is called a \emph{base} and we denote the set of bases of a matroid by $\B$.
For example, if $G = (V, E)$ is a connected undirected graph, the collection of all cycle-free edge sets forms the independence system of a matroid (called a \emph{graphic matroid}), whose bases are exactly the spanning trees of $G$.
It follows from \ref{M:augmentation} that all bases have the same cardinality.
A set $X$ that is not independent (i.e. $X \notin \I$) is called \emph{dependent}.
An inclusion-wise minimal dependent set is called a \emph{circuit}.
A circuit of cardinality $1$ is also called a \emph{loop}.
Two elements $e, f \in E$ are called \emph{parallel} if $\{e, f\}$ is a circuit.
In graphic matroids the terms circuit, loop, and parallel elements align completely with their graph theory counter parts, i.e. cycle, self-loop, and parallel edges, respectively.
The \emph{rank} $\rank(X)$ of a set $X \subseteq E$ is the size of any maximal independent set contained in $X$, i.e. $\rank(X) = \max \{|I| : I \subseteq X, I \in \I\}$.
We may also refer to the rank $\rank(B)$ of any base $B \in \B$, as the \emph{rank of the matroid} $M$, abbreviated with $\rank(M)$.

A \emph{cocircuit} of matroid $M$ is an inclusion-wise minimal set among those sets which have a non-empty intersection with every base of $M$.
Notice that cocircuits correspond to inclusion-wise minimal cuts in the graphic matroid.
A cocircuit of cardinality $1$ (i.e. an element that is contained in every base) is called \emph{coloop}.
We denote by $\C^\ast$ the collection of all cocircuits of matroid $M$.
Cocircuits (and also circuits) of matroids also obey the following cocircuit exchange property.
For all $C^\ast_1, C^\ast_2 \in \C^\ast$ with $e \in C^\ast_1 \cap C^\ast_2$, there exists $C^\ast_3 \in \C^\ast$, such that $C^\ast_3 \subseteq (C^\ast_1 \cup C^\ast_2)-e$ (Proposition~1.4.12 in \cite{oxley2006matroid}). 

Given a matroid $M = (E, \I)$ and a subset $Z \subseteq E$, we can derive two new matroids on ground set $E \setminus Z$ by \emph{deleting} and \emph{contracting} the elements in $Z$.
The deletion operation results in the matroid $M \delete Z \coloneqq (E \setminus Z, \I \delete Z)$ with independent sets $\I \delete Z \coloneqq \{I \subseteq E \setminus Z \mid I \in \I\}$.
The contraction operation results in the matroid $M \contract Z \coloneqq (E \setminus Z, \I \contract Z)$ with independent sets $\I \contract Z \coloneqq \{I \subseteq E \setminus Z \mid I \cup Z^\B \in \I\}$, where $Z^\B$ denotes an arbitrary base of $M \delete (E \setminus Z)$.
It is well-known that each of the two \emph{minors} $M \delete Z$ and $M \contract Z$ is again a matroid (see Section~39.3 in \cite{schrijver2003combinatorial}).
For the sake of convenience, we may also write $\mathcal{S} \delete Z$ and $\mathcal{S} \contract Z$ for $\mathcal{S} \in \{\B, \C, \C^\ast\}$ for the set of bases, circuits, cocircuits of $M \delete Z$ and $M \contract Z$, respectively.
If $Z$ is a singleton $\{e\}$, we may also write $M \delete e$ and $M \contract e$ instead of $M \delete \{e\}$ and $M \contract \{e\}$, respectively.
Also note that a series of deletions and contractions can be performed in any order of the elements, the resulting minor will always be the same (Proposition~3.1.25 in \cite{oxley2006matroid}).
Hence, it is valid to write $M \delete D \contract F$ to denote the minor of $M$ after deleting $D$ and contracting $F$ in any particular order.

Given a matroid $M = (E, \I)$ with base set $\B$, and a weight function $w\colon E \to \mathbb{R}_+$, a base $\hB \in \B$ of maximal weight $w(\hB) = \sum_{e \in \hB} w(e)$ can be computed with the \emph{matroid greedy algorithm} which starts with the initially empty independent set $I \coloneqq \emptyset$, and iteratively, in the order of non-increasing weights $w(e_1) \geq \ldots \geq w(e_m)$, adds the next element $e_t$ to $I$ in iteration $t \in [m]$ if (and only if) $I+e_t \in \I$.
For details see Theorem~40.1 in \cite{schrijver2003combinatorial}.
Hence, with the greedy algorithm, it is easy to determine whether $I$ is a base or whether $E \setminus X$ for some $X$ still contains a base.
Those kind of queries will become important in the auction presented in the next section.

\section{Description of the Ascending Auction}\label{sec:auction}
Recall that we consider an auction in which an auctioneer is selling a finite set of distinguishable, indivisible items $E$ to a finite set of buyers $N$.
We assume that each buyer $i \in N$ is interested in buying only the items in a subset $E_i \subseteq E$, and that each buyer $i \in N$ has an additive valuation function, i.e. they have a non-negative integral valuation $v_i(e) \in \mathbb{Z}_+$ for each item $e \in E_i$ and for each subset $X \subseteq E_i$, we write $v_i(X) = \sum_{e \in X} v_i(e)$.
In the auction we consider, the auctioneer is constrained to sell a base of a given matroid $M = (E, \I)$ on $E$, where $\I$ denotes the collection of independent sets.
We may assume that $E$ only contains those items for which at least one buyer is interested in buying them.
Notice that we can assume w.l.o.g. that no two players are interested in buying the same item.
If not, say if $e \in E_i \cap E_j$ for some item $e \in E$ and two buyers $i \neq j$, we can replace $e$ by two parallel copies $e^i$ and $e^j$, and replace $M = (E, \I)$ by $M' = (E', \I')$ with ground set $E' = E-e \cup \{e^i, e^j\}$ (and $E_i' = E_i-e+e^i, E_j' = E_j-e+e^j$) and independence system $\I'$, where
$\I' = \{I \in \I \mid e \notin I\} \cup \{I-e+e^i, I-e+e^j : I \in \I, e \in I\}$.
This construction extends naturally to the case when even more than two buyers are interested in the same item.

In other words, we can assume that $\{E_i\}_{i \in N}$ is a partition of $E$.
Thus, an instance of our auction is defined by a matroid $M = (E, \I)$ on a finite set $E$, a partition of $E$ into $E = \bigcup_{i \in N} E_i$, and a valuation function $v\colon E \to \mathbb{R}_+$ with $v(e) = v_i(e)$ for the (unique) buyer $i \in N$ with $e \in E_i$.
We furthermore assume that none of the buyers initially is a monopsonist in the sense that no cocircuit of $M$ is contained in any $E_i$.
Otherwise, we could just resolve the monopsonies by selling an item from the monopsony (buyer's choice) for price $0$ in the same way we resolve monopsonies later in the auction (see below).

Before we consider the ascending auction, we remark that there exists a \emph{sealed-bid} Vickrey auction which runs in strongly polynomial time for this model: 
the auctioneer starts by collecting the bids $b(e) \in \mathbb{Z}_+$ from every buyer $i \in N$ for every item $e \in E_i$, which may or may not be equal to her true valuation $v_i(e)$.
Based on the resulting bid vector $b = (b(e))_{e \in E}$, the auctioneer first computes a base $\hB \in \B$ of maximum $b$-weight.
To compute $\hB$, the auctioneer might use the matroid greedy algorithm as described above.
Afterwards, the auctioneer computes for every buyer $i \in N$ the Vickrey price, which incentivizes truthful bidding (cf. \cite{vickrey1961counterspeculation,clarke1971multipart,groves1973incentives}).
In our setting the Vickrey price of buyer $i \in N$ is
\begin{equation}\label{eq:VCG}
	\hp_{\hB}(i) = \max_{B \in \B \delete E_i} b(B) - b(\hB \setminus E_i).
\end{equation}
Since for every $i \in N$ a base of maximum $b$-weight in $M \delete E_i$ can be computed with the matroid greedy algorithm, the auction runs in strongly polynomial time.
As already argued in the introduction, we prefer an ascending auction over a sealed-bid auction, since an ascending auction is more transparent and requires less information from the buyers.

The paper \cite{bikhchandani2011ascending} describes and analyzes an ascending auction (more precisely a \emph{clock auction}) that meets the three mechanism design goals to 
\begin{enumerate*}[label=(\roman*)]
	\item incentivize truthful signals,
	\item compute a welfare-maximizing base, and
	\item run in strongly polynomial time.
\end{enumerate*}
In this section, we give a slightly modified but equivalent presentation of the auction introduced in \cite{bikhchandani2011ascending}.
An analysis of the auction, which is significantly simpler than the one in \cite{bikhchandani2011ascending}, is presented in Section~\ref{sec:analysis}.

\paragraph{Sketch of the ascending auction}
A key ingredient of the auction is that the auctioneer has to keep track of the cocircuits of the currently observed matroid $M \delete D \contract I$ (a minor of the original matroid $M$), where $D$ is the set of items which have been deleted during the auction, and $I$ is the set of already sold items.
As cocircuits have a non-empty intersection with every base, the auctioneer has to make sure that he sells at least one item of each cocircuit in order to sell a base.
The idea now is that the auctioneer increases the price for all items simultaneously.
At any time, a buyer $j$ has the option to announce that there is an item $f \in E_j$ that she would not buy if the price (currently $p$) would increase further (she also can announce this for multiple items of her set and other buyers might also do so at the same time).
In the sequel, we refer to those items as \emph{critical} at price $p$.
Then, the auctioneer removes critical items from the matroid one after another.
This removal will be carried out as a \emph{deletion} in the matroid.
Now as every item occurs in some cocircuit (unless it is a loop) and the cocircuits of the restricted matroid are the same up to removal of the deleted item (see Lemma~\ref{lem:cocircuits-before-deletion} below), some cocircuits decrease in size.
At some point there will be a cocircuit $C^\ast$ that is completely contained in some buyer's item set $E_i$; in that case, we call $C^\ast$ a \emph{monopsony}.\footnote{In \cite{bikhchandani2011ascending}, they refer to such a cocircuit as a \emph{monopoly}, which is accurate in case of a procurement auction. However, we feel that the term monopsony is more natural and consistent with our usage of the terms buyers, auctioneer, etc.}
Whenever a monopsony $C^\ast \subseteq E_i$ occurs, the auctioneer asks the buyer $i$ (the monopsonist) to name her most valuable item $e$ from $C^\ast$.
This announced item $e$ will then immediately be sold to buyer $i$ at price $p$ (in the words of \cite{ausubel2004efficient}, buyer $i$ has \emph{clinched} an item).
After item $e$ is sold, the auctioneer removes $e$ from the matroid.
This removal, however, will be performed as a \emph{contraction}.
After all monopsonies have been resolved (there might be more than one after a deletion), the auctioneer continues deleting critical items (some of those might have become a coloop after deletion of other items at the current price and hence, were sold), selling items whenever a monopsony arises, and increasing prices after all critical items have been removed.

The pseudocode \ref{alg:ascending-auction} below gives a precise description of the auction.
We amended the original pseudocode of \cite{bikhchandani2011ascending} only in the sense that we left out the tie-breaking (which is arbitrary anyway) and perform the deletion steps right away instead of after resolving all monoposonies that occur.

Initially, the auctioneer only knows the matroid $M = (E, \I)$ and the partition $\{E_i\}_{i \in N}$.
For all further information, the auctioneer has to receive \emph{signals} from the buyers, i.e. answers to queries about their valuation functions (see paragraph on truthful signaling below).
As usual in matroid optimization, we assume that the auctioneer has access to an independence oracle.
That is, he might query for any set $X \subseteq E$, whether $X \in \I$ in time $\mathcal{O}(1)$.
Note that, using an independence oracle, the auctioneer can efficiently check whether the conditions in the two \textbf{while}-loops are satisfied.

\begin{algorithm}[H]
	\SetAlgoRefName{Ascending Matroid Auction}
	\caption{$M = (E, \I)$ monopsony-free, partition $\{E_i\}_{i \in N}$ of $E$}
	\label{alg:ascending-auction}
	$I \coloneqq \emptyset, p \coloneqq 0, \hp(e) \coloneqq 0$ for all $e \in E$\\
	\While{$I \notin \B$}{
		$p \coloneqq p+1$\label{alg:ascending-auction:increment}\\
		$R(p) \coloneqq \{f \in E \mid f \text{ announced as critical at price } p\}$\label{alg:ascending-auction:critical}\\
		\For{$f \in R(p)$}{
			$M \coloneqq M \delete f$\\
			\While{$M$ contains a monopsony $C^\ast \subseteq E_i$ of some buyer $i$}{
				$e \coloneqq$ item from $C^\ast$ that buyer $i$ chooses to buy at price $p$\label{alg:ascending-auction:best}\\
				$I \coloneqq I+e$\\ 
				$\hp(e) \coloneqq p$\\
				$M \coloneqq M \contract e$\\
			}
		}
	}
	\Return{$\hB \coloneqq I$ \textup{and} $\hat{p}$}
\end{algorithm}

\paragraph{Truthful signaling}
We emphasize that the communication between auctioneer and buyers takes place only in lines~\ref{alg:ascending-auction:critical} and~\ref{alg:ascending-auction:best}.
We say that buyer $i$ is \emph{signaling truthfully} if her signals to the auctioneer are consistent with $v_i$ in the following sense:
If buyer $i$ is asked to report on her critical items in line~\ref{alg:ascending-auction:critical}, she truly announces those items $e$ as critical for which $v_i(e) = p$.
Furthermore, whenever the auctioneer asks buyer $i$ for her most-valued item in her monopsony $C^\ast \subseteq E_i$ in line~\ref{alg:ascending-auction:best}, she truly selects the item of largest $v$-value in her monopsony $C^\ast$.

\begin{example*}
	Consider the graphic matroid in Figure~\ref{fig:sell-event} with edges labeled by its buyer's valuation.
	At price $p=1$, there are no critical items (Figure~\ref{fig:sell-event:1}).
	Hence, the auctioneer raises the price of all items to $p=2$ at which multiple items become critical.
	After deleting one of those critical items, the auctioneer observes that three buyers now hold a monopsony (Figure~\ref{fig:sell-event:2}).
	While two of those monopsonies are singletons, one buyer still has to signal to the auctioneer which item in her cocircuit she values most (Figure~\ref{fig:sell-event:3}).
	Note that in this example untruthful signaling	would be a strategy that is inconsistent with every possible valuation function since one of the two items was announced as critical while the other one was not).
	After contracting all best elements from each monopsony, the auctioneer is left with two more parallel elements, both of which are critical, so he performs another deletion (Figure~\ref{fig:sell-event:4}) resulting in another singleton monopsony (Figure~\ref{fig:sell-event:5}) which can be contracted.
\end{example*}

\begin{figure}
	\centering
	\subcaptionbox{$p=1$, no critical items, no monopsonies.\label{fig:sell-event:1}}[.3\linewidth]{
		\centering
		\begin{tikzpicture}
			\node[vertex] (1) at (0:1) {};
			\node[vertex] (2) at (72:1) {};
			\node[vertex] (3) at (144:1) {};
			\node[vertex] (4) at (216:1) {};
			\node[vertex] (5) at (288:1) {};
			\draw[player1] (1) -- node[label,above right]{2} (2);
			\draw[player2] (2) -- node[label,above]{2} (3);
			\draw[player2] (2) -- node[label,below right]{3} (4);
			\draw[player3] (3) -- node[label,left]{2} (4);
			\draw[player3] (5) -- node[label,below right]{4} (1);
			\draw[player4] (4) -- node[label,below]{5} (5);
		\end{tikzpicture}
	}
	\subcaptionbox{$p=2$, three critical items, delete the one by buyer~1, three monopsonies.\label{fig:sell-event:2}}[.3\linewidth]{
		\centering
		\begin{tikzpicture}
			\node[vertex] (1) at (0:1) {};
			\node[vertex] (2) at (72:1) {};
			\node[vertex] (3) at (144:1) {};
			\node[vertex] (4) at (216:1) {};
			\node[vertex] (5) at (288:1) {};
			\draw[player2] (2) -- node[label,above]{2} (3);
			\draw[player2] (2) -- node[label,below right]{3} (4);
			\draw[player3] (3) -- node[label,left]{2} (4);
			\draw[player3] (5) -- node[label,below right]{4} (1);
			\draw[player4] (4) -- node[label,below]{5} (5);
			\draw[monopsony=rwthred] (0, 1.3) -- (0, .5) -- (.8, .5);
			\draw[monopsony=rwthblue] (1.2, .3) -- (.8, .3) -- (.8, -.3) -- (1.2, -.3);
			\draw[monopsony=rwthgreen] (-.8, -1) -- (1, .55);
		\end{tikzpicture}
	}
	\subcaptionbox{State after contracting items for buyer~3 and~4, buyer~2 decides to take her higher value item for price $2$.\label{fig:sell-event:3}}[.3\linewidth]{
		\centering
		\begin{tikzpicture}
			\node[hidden] (1) at (0:1) {};
			\node[vertex] (2) at (72:1) {};
			\node[vertex] (3) at (144:1) {};
			\node[vertex] (4) at (216:1) {};
			\node[hidden] (5) at (288:1) {};
			\draw[player2] (2) -- node[label,above]{2} (3);
			\draw[player2] (2) -- node[label,below right]{3} (4);
			\draw[player3] (3) -- node[label,left]{2} (4);
			\draw[monopsony=rwthred] (0, 1.3) -- (0, .5) -- (.8, .5);
		\end{tikzpicture}
	}
	\subcaptionbox{Delete critical item of buyer~2.\label{fig:sell-event:4}}[.3\linewidth]{
		\centering
		\begin{tikzpicture}
			\node[hidden] (1) at (0:1) {};
			\node[vertex] (2) at (72:1) {};
			\node[vertex] (3) at (144:1) {};
			\node[hidden] (4) at (216:1) {};
			\node[hidden] (5) at (288:1) {};
			\draw[player2] (2) to[bend right] node[label,above]{2} (3);
			\draw[player3] (3) to[bend right] node[label,below]{2} (2);
		\end{tikzpicture}
	}
	\subcaptionbox{Buyer~3 has now a monopsony, sell her the item.\label{fig:sell-event:5}}[.3\linewidth]{
		\centering
		\begin{tikzpicture}
			\node[hidden] (1) at (0:1) {};
			\node[vertex] (2) at (72:1) {};
			\node[vertex] (3) at (144:1) {};
			\node[hidden] (4) at (216:1) {};
			\node[hidden] (5) at (288:1) {};
			\draw[player3] (3) to[bend right] node[label,below]{2} (2);
			\draw[monopsony=rwthblue] (.05, 1.2) -- (.05, .7) -- (.6, .6);
		\end{tikzpicture}
	}
	\caption{
		Example iteration on a graphic matroid with four buyers 1 (black, solid), 2 (red, dashed), 3 (blue, dotted), 4 (green, dotdashed): For $p=0$ and $p=1$ there are no critical items.
		When $p$ increases to $2$ multiple items get critical and trigger some sales, including some of critical items.
	}
	\label{fig:sell-event}
\end{figure}
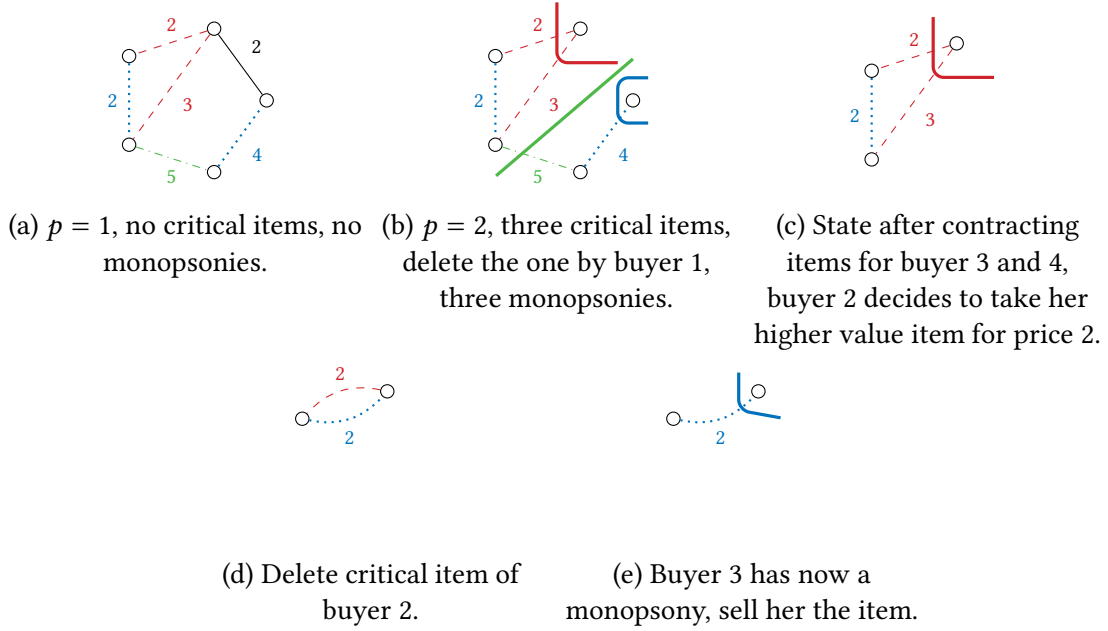

\section{Analysis of the Auction}\label{sec:analysis}
Recall the three mechanism design goals \ref{goals:eq}-\ref{goals:poly} described in the Introduction.
Computational efficiency, i.e. property~\ref{goals:poly}, is clearly fulfilled:
The running time of the auction scales polynomially in the number of items and buyers.
It even runs in strongly polynomial time by using a long-step variant that only requires a reasonable consistency check on the signals (see \cite{bikhchandani2011ascending} and \cite{raach2023implementing} for the polymatroid case).
That is, by replacing line~\ref{alg:ascending-auction:increment} in \ref{alg:ascending-auction} with 

{
\RestyleAlgo{plain}
\begin{algorithm}[H]
	\setcounter{AlgoLine}{2}
	\SetNlSty{}{}{'}
	~\textbf{\textbar}~~~~$p \coloneqq \min_{i \in N} \{q_i \mid \text{smallest } q_i > p \text{ for which } i \text{ has a critical item}\}$\label{alg:ascending-auction:incrementLong}
\end{algorithm}
}

and the understanding that in line~\ref{alg:ascending-auction:critical} every buyer who minimized the term in line~\ref{alg:ascending-auction:incrementLong} has to announce at least one item as critical, we obtain a long-step auction that results in the same outcome.

The keys to show that the \ref{alg:ascending-auction} yields a welfare maximizing allocation (under truthful signals) and assigns Vickrey prices are a few statements about cocircuits and how they evolve by taking minors of the matroid.

In essence, we will need the following three matroid folklore lemmas.
\begin{restatable}[Cocircuits after deletion]{lemma}{cocircuitsAfterDeletion}\label{lem:cocircuits-after-deletion}
	Let $C^\ast$ be a cocircuit of $M$ and $e \in E$.
	Then $C^\ast-e$ is a union of cocircuits of $M \delete e$.
\end{restatable}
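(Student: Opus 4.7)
The plan is to pass to the dual matroid and reduce the statement to the well-known description of circuits under contraction. By matroid duality, $(M \delete e)^\ast = M^\ast \contract e$, so the cocircuits of $M \delete e$ coincide with the circuits of $M^\ast \contract e$. The standard description of circuits after contraction (Proposition~3.1.10 in \cite{oxley2006matroid}) identifies these with the inclusion-minimal nonempty members of $\{C - e : C \text{ is a circuit of } M^\ast\}$, and since circuits of $M^\ast$ are precisely the cocircuits of $M$, the cocircuits of $M \delete e$ are exactly the inclusion-minimal nonempty members of the family $\mathcal{F} := \{D - e : D \in \C^\ast\}$. Because $C^\ast - e$ already belongs to $\mathcal{F}$, the lemma reduces to the purely combinatorial claim that every member of $\mathcal{F}$ is a union of its inclusion-minimal members.

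Given any $f \in C^\ast - e$, I would exhibit a cocircuit of $M \delete e$ that contains $f$ and is contained in $C^\ast - e$. Among all $D \in \C^\ast$ satisfying $f \in D - e \subseteq C^\ast - e$ (a nonempty family, since $D = C^\ast$ qualifies), I select $C_1$ so that $C_1 - e$ is inclusion-minimal, and claim $C_1 - e$ is in fact minimal throughout $\mathcal{F}$. If not, some $D^\ast \in \C^\ast$ satisfies $\emptyset \neq D^\ast - e \subsetneq C_1 - e$; pick $g \in D^\ast - e$, so that $g \in C_1 \cap D^\ast$ and $g \neq e$. If $f \in D^\ast$, then $D^\ast$ itself already lies in the restricted family with a strictly smaller trace, contradicting the choice of $C_1$. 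Otherwise $f \in C_1 \setminus D^\ast$, and applying strong cocircuit elimination to $C_1, D^\ast$ at the pair $(g, f)$ produces a cocircuit $C_3$ with $f \in C_3 \subseteq (C_1 \cup D^\ast) - g$; since $D^\ast - e \subseteq C_1 - e$, its trace satisfies $C_3 - e \subseteq (C_1 - e) - g \subsetneq C_1 - e$, a strictly smaller candidate in the restricted family, again a contradiction.

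The main obstacle I anticipate is that the second case requires the \emph{strong} form of cocircuit elimination (for cocircuits $C_1, C_2$, every $g \in C_1 \cap C_2$ and every $f \in C_1 \setminus C_2$ are related by a cocircuit $C_3$ with $f \in C_3 \subseteq (C_1 \cup C_2) - g$) rather than the weaker form recorded in Section~\ref{sec:matroids}. This strong version is standard matroid folklore --- indeed, it is exactly what Oxley's Proposition~1.4.12 (cited in the excerpt) asserts --- so invoking it keeps the argument short; if one insists on using only the weak exchange axiom, either an iterative shrinking argument or a rank/hyperplane reformulation would be needed.
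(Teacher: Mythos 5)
Your proof is correct, but it follows a genuinely different route from the paper's. The paper proves the dual statement about circuits under contraction (if $e \notin C \in \C$, then $C$ is a union of circuits of $M \contract e$) by a rank computation: it uses the identity $\rank_{M \contract e}(X) = \rank(X + e) - \rank(e)$ together with submodularity to show $\rank_{M \contract e}(C) = \rank_{M \contract e}(C-f)$ for every $f \in C$, so that each $f$ lies in a circuit of $M \contract e$ inside $C$; the cocircuit version then follows by duality. You instead dualize first, invoke the characterization of $\C(M^\ast \contract e)$ as the inclusion-minimal nonempty traces $\{D - e : D \in \C^\ast\}$, and then run a purely combinatorial minimality argument powered by strong cocircuit elimination: your case analysis (whether $f \in D^\ast$ or $f \in C_1 \setminus D^\ast$) is complete, the containment $C_3 \subseteq (C_1 + e) - g$ follows correctly from $D^\ast \subseteq C_1 + e$, and the edge cases ($C^\ast - e = \emptyset$, yielding the empty union) are handled implicitly. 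The trade-off is as you anticipate: the paper's argument needs only the rank function of a minor and submodularity, staying within the toolkit it sets up in Section~\ref{sec:matroids}, whereas yours is shorter at the price of importing strong circuit elimination (the paper records only the weak form, even though the cited Proposition~1.4.12 of Oxley is indeed the strong one) and the minimal-trace description of circuits of a contraction. Both are legitimate; yours arguably makes the combinatorial content more transparent, while the paper's keeps the prerequisite list minimal.
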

\begin{restatable}[Cocircuits before deletion]{lemma}{cocircuitsBeforeDeletion}\label{lem:cocircuits-before-deletion}
	Let $C^\ast$ be a cocircuit of $M \delete e$.
	Then $C^\ast$ or $C^\ast+e$ is a cocircuit of $M$.
\end{restatable}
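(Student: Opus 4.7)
The plan is to exploit the definitional characterization given in Section~\ref{sec:matroids}: a subset $F \subseteq E$ is a cocircuit of $M$ exactly when it is an inclusion-wise minimal set meeting every base of $M$. First I will show that $C^\ast + e$ always meets every base of $M$, and then argue by a dichotomy on whether $C^\ast$ itself already has this property: if yes, $C^\ast$ is the desired cocircuit; otherwise $C^\ast + e$ is.

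To see that $C^\ast + e$ meets every base $B$ of $M$, split on whether $e \in B$. If $e \in B$ the intersection contains $e$. Otherwise $B \subseteq E - e$ is independent in $M$ of size $\rank(M)$, and the existence of such a $B$ forces $e$ not to be a coloop, so $\rank(M \delete e) = \rank(M) = |B|$; thus $B$ is a base of $M \delete e$ and is hit by $C^\ast$ via its cocircuit property there.

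For the dichotomy, note first that if $e$ happens to be a coloop then $C^\ast$ already meets every base $B$ of $M$ (because $B - e$ is then a base of $M \delete e$ and $e \notin C^\ast$), so this case falls into the first branch. In the first branch, minimality of $C^\ast$ as a cocircuit of $M$ follows from minimality in $M \delete e$: any $D \subsetneq C^\ast$ misses some base $B'$ of $M \delete e$, and $B'$ (when $e$ is not a coloop) or $B' + e$ (when $e$ is) is a base of $M$ still disjoint from $D$ because $e \notin D$. In the second branch, some base $B_0$ of $M$ avoids $C^\ast$, which forces $e \in B_0$ (and in particular $e$ is not a coloop). To show $C^\ast + e$ is minimal, pick $D \subsetneq C^\ast + e$: if $e \notin D$ then $D \subseteq C^\ast$ misses $B_0$; if $e \in D$ then $D - e \subsetneq C^\ast$ misses some base $B'$ of $M \delete e$ by minimality, and since $e$ is not a coloop $B'$ is itself a base of $M$ that $D$ also misses because $e \notin B'$.

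The only delicate point I foresee is preventing the coloop and non-coloop subcases from cluttering the write-up. Using the uniform fact that every base of $M \delete e$ either is a base of $M$ or becomes one upon adding $e$ should let both minimality arguments be stated with essentially no case-splitting in the final text.
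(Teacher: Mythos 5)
Your proof is correct and takes essentially the same route as the paper's: both work directly from the characterization of cocircuits as inclusion-wise minimal sets meeting every base, first verifying that $C^\ast+e$ meets every base of $M$ and then deriving minimality of the surviving set from the minimality of $C^\ast$ in $M \delete e$. Your write-up is if anything more explicit than the paper's about the dichotomy and the coloop case; the only cosmetic slip is that ``$e$ is not a coloop'' in the second branch follows from your earlier observation that the coloop case lands in the first branch, not from $e \in B_0$.
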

\begin{restatable}[Augmentation by cocircuit, \cite{dawson1980optimal}]{lemma}{augmentationByCocircuits}\label{lem:augmentation-by-cocircuits}
	Let $I$ be an independent set of a matroid $M$ that can be extended to a maximum weight base.
	Let $C^\ast$ be any cocircuit of $M \contract I$ and $e \in C^\ast$ its maximum weight element.
	Then $I+e$ is independent and can be extended to a maximum weight base.
\end{restatable}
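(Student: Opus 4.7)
The plan is to reduce the statement to an explicit swap argument inside a fixed maximum-weight base containing $I$. The first step is to use the standard duality $(M \contract I)^\ast = M^\ast \delete I$, which identifies the cocircuits of $M \contract I$ with exactly those cocircuits of $M$ that are disjoint from $I$. This immediately gives $e \notin I$, and since loops of a matroid belong to no cocircuit, $e$ is not a loop of $M \contract I$; hence $\{e\}$ is independent in $M \contract I$, which is equivalent to $I+e$ being independent in $M$.

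Next, I fix an arbitrary maximum-weight base $B$ of $M$ with $I \subseteq B$. If $e \in B$ we are done, so assume $e \notin B$. Let $C = C(e, B)$ be the fundamental circuit of $e$ with respect to $B$, i.e.\ the unique circuit of $M$ contained in $B+e$. Since $C$ and $C^\ast$ both contain $e$, and a circuit and a cocircuit cannot meet in exactly one element, there exists some $g \in (C \cap C^\ast)-e$. From $C \subseteq B+e$ and $g \neq e$ we obtain $g \in B$, and from $C^\ast \cap I = \emptyset$ we obtain $g \notin I$.

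By the fundamental circuit exchange, $B' \coloneq B - g + e$ is again a base of $M$. Comparing weights, $w(B') = w(B) - w(g) + w(e) \geq w(B)$, because $g \in C^\ast$ while $e$ was chosen as a maximum-weight element of $C^\ast$. Maximality of $B$ turns this inequality into an equality, so $B'$ is itself a maximum-weight base; and since $g \notin I$, we have $I + e \subseteq B'$, completing the argument.

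The only mildly subtle point is pinning down the swap partner $g$ so that it simultaneously lies in $B$, in $C(e,B)$, in $C^\ast$, and outside $I$: membership in $C(e,B)$ is what allows the swap to produce a base, membership in $C^\ast$ is what forces $w(e) \geq w(g)$, and being outside $I$ is what preserves $I$ (and hence $I+e$) inside the new base. Circuit–cocircuit orthogonality delivers all three conditions in one stroke, once we have recognized via duality that $C^\ast$ is actually a cocircuit of $M$ avoiding $I$.
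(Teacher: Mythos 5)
Your proof is correct, but it takes a genuinely different route from the paper. The paper first constructs an auxiliary base $B = B_0 \cup I$ of $M$ with $B \cap C^\ast = \{e\}$ (using minimality of the cocircuit to find $B_0 \in \B \contract I$ meeting $C^\ast$ only in $e$), then invokes Brualdi's strong (symmetric) base exchange between $B$ and a fixed max-weight base $\hB \supseteq I$ to produce a swap partner $f$, and finally argues $f \in C^\ast$ by showing that otherwise $B-e+f$ would avoid $C^\ast$ entirely. You instead identify $C^\ast$ as a cocircuit of $M$ disjoint from $I$ via the duality $(M \contract I)^\ast = M^\ast \delete I$, and then locate the swap partner $g$ directly inside the fundamental circuit $C(e,B)$ by circuit--cocircuit orthogonality ($|C \cap C^\ast| \neq 1$). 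Both arguments hinge on the same two requirements for the swap partner --- it must make the exchange a base and it must lie in $C^\ast$ so that maximality of $e$ forces $w(e) \geq w(g)$ --- but your version replaces the strong exchange theorem and the auxiliary base $B_0$ with the more elementary unique-circuit exchange, which arguably makes the argument shorter and more self-contained; the price is that you lean on two other standard facts (minor duality and orthogonality) that the paper does not prove either, so the two proofs are comparable in the amount of imported folklore. One small presentational point: your weight bookkeeping yields $w(B') \geq w(B)$ and then equality by maximality, whereas the paper derives $w(e) \leq w(f)$ from maximality first and then equality from $e$ being the maximum of $C^\ast$; these are the same computation read in opposite directions, and both correctly conclude that the exchanged base is again maximum-weight and contains $I+e$.
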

Since the proofs of these lemmas are not always easy to find (including Dawson's Lemma, which we do not state verbatim above), we added their proofs in Appendix~\ref{apx:folklore}.

Now let us state and prove the main theorem of \cite{bikhchandani2011ascending}.

\begin{theorem}[Theorem~13 in \cite{bikhchandani2011ascending}]\label{thm:main-theorem}
	The \ref{alg:ascending-auction} returns a welfare-maximizing allocation and assigns Vickrey prices.
\end{theorem}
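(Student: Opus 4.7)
Let $I_t$, $D_t$, and $M_t \coloneq M \delete D_t \contract I_t$ denote the sold items, the deleted items, and the current matroid at step $t$ of the auction. I would split the proof into welfare maximization and Vickrey pricing, each handled by an invariant argument along the execution.

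For welfare, the plan is to maintain the invariant that the maximum-weight base of $M$ has weight $v(I_t) + \mu(M_t)$, where $\mu(M_t)$ denotes the maximum $v$-weight of a base of $M_t$. Price increments do nothing. A contraction of $e$ from a monopsony $C^\ast \subseteq E_i$ preserves the invariant directly by Lemma~\ref{lem:augmentation-by-cocircuits} applied to $M_t$ with empty independent set: truthful signaling guarantees that $e$ is the max-$v$ element of $C^\ast$, so some max-weight base of $M_t$ contains $e$ and $\mu(M_t) = v(e) + \mu(M_t \contract e)$. A deletion of a critical item $f$ at the current price $p$ preserves the invariant as well, because $f$ is not a coloop of $M_t$ (otherwise $\{f\}$ would have formed a monopsony already resolved in the preceding inner loop at the same price) and every remaining item has $v$-value at least $p = v(f)$; the fundamental cocircuit of $f$ with respect to any max-weight base $B \ni f$ thus furnishes a swap $B - f + g \in \B(M_t)$ of weight at least $v(B)$ and avoiding $f$. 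At termination $M_t$ has rank $0$, so $I_t$ is a max-weight base of $M$.

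For Vickrey prices, I would enumerate the items clinched by buyer $i$ in order as $e_1, \ldots, e_k$ at prices $p_1 \leq \cdots \leq p_k$ via monopsonies $C^\ast_j \subseteq E_i$ of the minors $M_{t_j}$, and construct the max-weight base $B^{-i}$ of $M \delete E_i$ by greedy augmentation in lock-step with these events. For each $j$, I would iteratively apply Lemma~\ref{lem:cocircuits-before-deletion} across the deletions in $D_{t_j}$ to lift $C^\ast_j$ to a cocircuit $\tilde C^\ast_j$ of $M$ contracted by the earlier clinches of $i$; the lift can be chosen to contain a witness $f_j \notin E_i$ with $v(f_j) = p_j$, namely the price-setting element whose deletion made $C^\ast_j$ land inside $E_i$. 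By Lemma~\ref{lem:cocircuits-after-deletion} the trace $\tilde C^\ast_j \setminus E_i$ is a union of cocircuits of the corresponding minor of $M \delete E_i$, in one of which $f_j$ is the max-weight element. Starting from $J_0 \coloneq \hB \setminus E_i$, which extends to $B^{-i}$ by applying the welfare invariant to the counterfactual auction run without buyer $i$, Lemma~\ref{lem:augmentation-by-cocircuits} then yields that $J_j \coloneq J_{j-1} + f_j$ stays extendable to a max-weight base of $M \delete E_i$. After $k$ steps $J_k$ is itself such a max-weight base, and a weight count gives $v(B^{-i}) = v(\hB \setminus E_i) + \sum_j p_j$, matching the Vickrey payment $\hp_{\hB}(i)$.

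The welfare part is a clean invariant argument. The main obstacle will be the bookkeeping for Vickrey: I have to verify that the lifted cocircuits $\tilde C^\ast_j$ and the witnesses $f_j$ can be chosen consistently across $j$, so that the $f_j$ are distinct, disjoint from $\hB \setminus E_i$, and all satisfy $v(f_j) = p_j$. Establishing that $\hB \setminus E_i$ itself extends to $B^{-i}$ inside $M \delete E_i$ is a separate but parallel application of the welfare invariant to the auction played on $M \delete E_i$.
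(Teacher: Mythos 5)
Your welfare half is correct and takes a mildly different route from ours. Instead of maintaining that $I$ is max-weight-base extendable in $M$ and lifting each monopsony back to a cocircuit of $M \contract I$ via Lemma~\ref{lem:cocircuits-before-deletion}, you keep the weight identity $\max_{B \in \B} v(B) = v(I_t) + \mu(M_t)$ and apply Lemma~\ref{lem:augmentation-by-cocircuits} directly in the current minor $M_t$, paying for this with a separate swap argument that deleting a non-coloop of minimum remaining value preserves $\mu$. That swap argument is sound (no coloops survive the inner loop, and truthful signaling guarantees every surviving item has value at least the current price), so this part stands on its own and even avoids Lemma~\ref{lem:cocircuits-before-deletion} for the efficiency claim.

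The Vickrey half, however, has a genuine gap, and it sits exactly where you locate ``the main obstacle.'' First, your base case is unjustified: you assert that $J_0 = \hB \setminus E_i$ extends to a max-weight base of $M \delete E_i$ by ``applying the welfare invariant to the counterfactual auction run without buyer $i$.'' That counterfactual auction is a different process selling a possibly disjoint set of items; its invariant says nothing about $\hB \setminus E_i$. Showing that $\hB \setminus E_i$ is extendable in $M \delete E_i$ already requires the projection argument (lift the cocircuit witnessing the sale of each $e \in \hB \setminus E_i$ via Lemma~\ref{lem:cocircuits-before-deletion}, push it into $M \delete E_i$ via Lemma~\ref{lem:cocircuits-after-deletion}, then apply Lemma~\ref{lem:augmentation-by-cocircuits}), so it cannot be outsourced to a separate run. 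Second, the minors do not line up for your sequential augmentation: the lifted cocircuit $\tilde C^\ast_j$ lives in $M$ contracted by \emph{everything} sold before the $j$-th clinch (to all buyers), whereas Lemma~\ref{lem:augmentation-by-cocircuits} would need a cocircuit of $M \delete E_i \contract J_{j-1}$, where your $J_{j-1}$ already contains items of $\hB \setminus E_i$ sold \emph{after} that clinch. The augmentations have to be interleaved in the temporal order of the auction --- at the step where $e$ was sold, add $e$ itself if $e \notin E_i$ and its price-setter $f_e$ if $e \in E_i$ --- which is precisely the two-case structure of the paper's proof. Finally, the weight count $v(B^{-i}) = v(\hB \setminus E_i) + \sum_j p_j$ needs the witnesses $f_j$ to be pairwise distinct; this is not mere bookkeeping but follows from the cocircuit exchange property (a single deletion creates at most one monopsony per buyer), an argument your plan flags but does not supply.
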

\begin{proof}
	For the proof, we assume truthful signals from the buyers.
	The discussion in Section~\ref{sec:discussion} shows why this is justified.
	
	First, we show that the resulting allocation $\hB$ is welfare-maximizing.
	For that it suffices to show that $I$, the set of sold items in the \ref{alg:ascending-auction}, is max-weight base extendable throughout the whole auction.
	Since we start with $I = \emptyset$, the statement is obviously true in the beginning.
	Given some stage of the auction with sold items $I$ and deleted items $D$, we consider the minor $M \delete D \contract I$.
	Say $e \in E \setminus (D \cup I)$ is the next item to be added to $I$.
	This means $e$ is an item of maximum value in some monopsony $C^\ast$, i.e. a cocircuit of $M \delete D \contract I$.
	Thus, by applying Lemma~\ref{lem:cocircuits-before-deletion} repeatedly, we have that $\tilde{C}^\ast = C^\ast \cup D'$ is a cocircuit of $M \contract I$ for some $D' \subseteq D$.
	As $v(e) \geq v(f)$ for all $f \in D$ (and in particular $D'$), $e$ has also maximum value in $\tilde{C}^\ast$.
	Hence, by Lemma~\ref{lem:augmentation-by-cocircuits}, $I+e$ is max-weight base extendable.

	Now let us show that the auction also charges Vickrey prices to the buyers.
	For each $e \in \hB$, let $f_e$ denote the item which is deleted before $e$ was added to $I$, i.e. $f_e$ is the price-setting item for $e$ and we have $p(e) = v(f_e)$ .
	We need to show that for all $i \in N$
	\[
		\hB' = \hB \setminus E_i \cup \{f_e : e \in \hB \cap E_i\}
	\]
	is a max-weight base of $M \delete E_i$.
	Note that a deletion of an item $f$ can create at most one monopsony per buyer: 
	If $C^\ast_1, C^\ast_2$ are both monopsonies of buyer $j$, then $C^\ast_1+f$ and $C^\ast_2+f$ are both cocircuits in the matroid before deleting $f$ and hence, by the cocircuit exchange property, there exists a cocircuit $C^\ast_3 \subseteq (C^\ast_1 \cup C^\ast_2)-f$ which would be a monopsony of $j$ before deleting $f$.
	That is, $\hB'$ has the correct cardinality to be a base of $M \delete E_i$ (recall that initially there is no monopsony in $M$).
	Thus, we only need to show that $\hB'$ is indeed welfare-maximizing and independent $M \delete E_i$.
	We again use Lemma~\ref{lem:augmentation-by-cocircuits} to show that all elements $e \in \hB \setminus E_i$ are still suitable for the new base and that for all $e \in \hB \cap E_i$, its corresponding $f_e$ is a suitable replacement.
	Consider some $e \in \hB$ and the set $I$ before $e$ was added.
	Then $e$ was an item of maximum value in some monopsony $C^\ast$ of $M \delete D \contract I$ .
	We also have $v(e) \geq v(f)$ for all $f \in D$ .
	Thus, $e$ also has maximum value in $\tilde{C}^\ast = C^\ast \cup D'$, a cocircuit of $M \contract I$ with $D' \subseteq D$ by applying Lemma~\ref{lem:cocircuits-before-deletion}.
	Moreover $f_e \in D'$ as otherwise $C^\ast$ would have been a monopsony before $f_e$ was deleted.
	From here, we make a case distinction:

	\textbf{Case 1:} $e \notin E_i$.
	By applying Lemma~\ref{lem:cocircuits-after-deletion} on $M \contract I$ w.r.t. all the items in $E_i$, we also have that $e$ has maximum value in some cocircuit of $M \contract I \delete E_i = M \delete E_i \contract I$ and hence, by Lemma~\ref{lem:augmentation-by-cocircuits}, $I+e$ is max-weight base extendable for $M \delete E_i$.

	\textbf{Case 2:} $e \in E_i$.
	Deleting $E_i$ (including $e$) yields the minor $M \contract I \delete E_i = M \delete E_i \contract I$.
	Note that $v(f_e) \geq v(f)$ for all $f \in D$ as $f_e$ was deleted last.
	Applying Lemma~\ref{lem:cocircuits-after-deletion} on $M \contract I$ w.r.t. all items in $E_i$ yields that there is a cocircuit $\breve{C}^\ast \subseteq \tilde{C}^\ast$ of $M \delete E_i \contract I$ with $f_e$ as item of maximum value.
	So by Lemma~\ref{lem:augmentation-by-cocircuits}, we have that $I+f_e$ is max-weight base extendable in $M \delete E_i$.	
\end{proof}

\subsection{From Theorem~\ref{thm:main-theorem} to Truthful Equilibrium}\label{sec:discussion}
Note that one cannot simply apply the VCG Theorem \cite{vickrey1961counterspeculation,clarke1971multipart,groves1973incentives} to claim that the auction incentivizes truthful signaling just because the prices match Vickrey prices.
The reason is that, in contrast to a sealed-bid auction, in an ascending auction buyers gain information about the other buyers' valuations as the auction progresses.
Indeed, in the example in Appendix~\ref{apx:not-dominant} it is shown that truthful signaling is not a dominant strategy in this auction (and in fact, no dominant strategy need to exist).
However, the following theorem can be shown.
\begin{theorem}[Theorem~17 in \cite{bikhchandani2011ascending}]\label{thm:incentives}
	Truthful signaling is an ex-post equilibrium of the \ref{alg:ascending-auction}.
\end{theorem}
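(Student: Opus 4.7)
The plan is to reduce the ex-post equilibrium claim to strategy-proofness of the direct-revelation VCG mechanism, using Theorem~\ref{thm:main-theorem} to identify the auction's outcome with VCG applied to an ``effective'' reported valuation. Fix a buyer $i$ with true valuation $v_i$, assume all other buyers $j \neq i$ signal truthfully with respect to $v_j$, and let $\sigma_i$ be an arbitrary (possibly history-dependent) strategy for buyer $i$. Since the others' signals are a deterministic function of $v_{-i}$ and the public history, the entire auction run is determined by $\sigma_i$; denote the resulting allocation and prices by $\hB'$ and $\hp'$.

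The heart of the proof is to rationalize $\sigma_i$ by an additive surrogate valuation $v_i'\colon E_i \to \Z_+$ such that truthful signaling with respect to $v_i'$ reproduces, on the realized run, exactly the signals that $\sigma_i$ produces. Concretely, set $v_i'(e) = p$ whenever $\sigma_i$ announces $e$ as critical at price $p$; set $v_i'(e)$ to a value strictly larger than the highest price reached for every item $e$ that $\sigma_i$ clinches; and calibrate these large values so that the $\sigma_i$-chosen item in each monopsony is the unique $v_i'$-maximum of that cocircuit. A straightforward induction over the outer \textbf{while}-loop of \ref{alg:ascending-auction} then shows that executing the auction on the truthful profile $(v_i', v_{-i})$ visits the same prices, the same critical sets, the same monopsonies, and the same clinching choices as executing it on $(\sigma_i, v_{-i})$, so it yields the same outcome $(\hB', \hp')$.

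With this identification in hand, Theorem~\ref{thm:main-theorem} applied to $(v_i', v_{-i})$ guarantees that $\hB'$ is a max-weight base with respect to $(v_i', v_{-i})$ and that the total price charged to $i$ equals the Vickrey price based on $v_{-i}$ alone:
\[ \hp'(i) \;=\; \max_{B \in \B \delete E_i} v_{-i}(B) \;-\; v_{-i}(\hB' \setminus E_i). \]
Since this quantity is independent of $v_i'$, buyer $i$'s utility telescopes to
\[ u_i(\sigma_i) \;=\; v_i(\hB' \cap E_i) \,+\, v_{-i}(\hB' \setminus E_i) \;-\; \max_{B \in \B \delete E_i} v_{-i}(B), \]
whose first two summands form the welfare of $\hB'$ under the \emph{true} profile $(v_i, v_{-i})$ and whose last summand is a constant in $\sigma_i$. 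This welfare is maximized precisely by a $(v_i, v_{-i})$-max-weight base, which by Theorem~\ref{thm:main-theorem} is what the auction returns under truthful signaling $v_i' = v_i$. Hence no deviation strictly improves $i$'s payoff, proving the ex-post equilibrium property.

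The main obstacle is making the rationalization step watertight: one must verify inductively that the surrogate valuation $v_i'$ (with its large clinch values and carefully chosen ranking on each monopsony) truly reproduces $\sigma_i$'s signals at \emph{every} point of the run, and that the large values do not inadvertently flag clinch items as critical at some earlier price or distort a monopsony choice. Once this matching invariant is established, the remaining argument is just the standard VCG strategy-proofness calculation pulled through Theorem~\ref{thm:main-theorem}.
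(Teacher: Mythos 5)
Your overall plan---rationalize the deviation by a surrogate valuation $v_i'$ and then run the standard VCG calculation through Theorem~\ref{thm:main-theorem}---is the same route the paper sketches (and defers to \cite{bikhchandani2011ascending} for), and the telescoping of $u_i(\sigma_i)$ at the end is correct. The gap is in the rationalization step, and it is not merely a matter of ``making it watertight'': the matching invariant you want to establish is false in general. A strategy $\sigma_i$ may announce an item $e \in E_i$ as critical at price $p$ and then, before $e$ is deleted, clinch $e$ from a monopsony $C^\ast \subseteq E_i$ that also contains an item $f$ which was \emph{not} announced critical. Reproducing these signals by truthful play with respect to some $v_i'$ would require $v_i'(e) = p$, $v_i'(f) > p$, and $v_i'(e) \geq v_i'(f)$ simultaneously, which is impossible; your two assignment rules (``$v_i'(e)=p$ if announced critical'' and ``$v_i'(e)$ large if clinched'') collide on exactly such items. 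This is precisely the inconsistent-signaling scenario of Figure~\ref{fig:inconsistent-signal}, and the whole point of Appendix~\ref{apx:not-dominant} is that such signal histories cannot be represented by \emph{any} valuation function.

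What the argument actually needs---and what your proposal is missing---is the weaker claim that any inconsistent signaling strategy can be replaced by a \emph{consistent} one that yields the same outcome (allocation and payment) for every buyer, even though it produces different signals and hence a different internal run of the auction. Establishing this replacement is the genuinely tedious part of the proof: one must argue that, e.g., clinching the critical item $e$ rather than $f$ from $C^\ast$ alters the subsequent deletions and contractions but not the final bundles and prices. Only after that reduction does your ``surrogate valuation plus VCG'' step apply, now to the consistent replacement strategy rather than to $\sigma_i$ itself.
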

The proof is a bit tedious but straight-forward:
A signaling strategy that is inconsistent with any valuation function can always be replaced with a consistent signaling strategy without changing the outcome for any buyer, provided that all other buyers signal truthfully.
However, among all consistent strategies, a truthful one is best by the VCG Theorem.
We refer to \cite{bikhchandani2011ascending} for proof details.

This justifies that (here and in \cite{bikhchandani2011ascending}) Theorem~\ref{thm:main-theorem} is proven under the assumption that all buyers are signaling truthfully.

\bibliographystyle{alpha}
\bibliography{references}

\appendix
\section{Proofs for Matroid Folklore Lemmas}\label{apx:folklore}
Recall that our simplified proof of Theorem~\ref{thm:main-theorem} relies on the three folkloric facts about matroids, namely Lemma~\ref{lem:cocircuits-after-deletion}--\ref{lem:augmentation-by-cocircuits}.
We call these lemmas folkloric since even to an expert on matroid theory, their origin might very well be unknown, even though their statements immediately ring true in the ear of a person who is just a little bit familiar with matroids.
In this appendix, we give proofs to all three lemmas.

Lemma~\ref{lem:cocircuits-after-deletion} is a special case of the dual statement of an exercise in \cite{oxley2006matroid}.

\cocircuitsAfterDeletion*
\begin{proof}
	We prove the following stronger statement (see \cite{oxley2006matroid}, Exercise~2 in Section~3.1):
	
	\textit{
		Let $C \in \C$ and $e \in E$.
		\begin{enumerate}[label=\textit{(\alph*)}]
			\item If $e \in C$, then $\{e\} \in \C$ or $C-e \in \C(M \contract e)$.
			\item If $e \notin C$, then $C$ is a union of circuits of $\C(M \contract e)$.
		\end{enumerate}
	}
	
	Our statement follows as the dual statement by taking deletions in the cocircuit set instead of contractions in the circuit set and the understanding that the empty set is an empty union of cocircuits if $e$ was a coloop and a single cocircuit is the union over a singleton. 
	\begin{enumerate}[label=\textit{(\alph*)}]
		\item Let $e \in C \in \C$.
			If $C = \{e\}$, there is nothing to show.
			Otherwise, consider $M \contract e$.
			Assume $C-e \notin \C \contract e$, i.e. $C-e \in \I \contract e$ or $C-e$ properly contains a circuit of $M \contract e$.
			As $\I \contract e = \{I \subseteq E-e \mid I+e \in \I\}$, the former cannot be true, so assume the latter case.
			Let $C' \subset C-e$ with $C' \in \C \contract e$.
			But then $C'+e \subset C$ and hence, $C'+e \in \I$ thus, $C' \in \I \contract e$.
		\item Let $e \notin C \in \C$.
			If $C \in \C \contract e$, we are done.
			Otherwise, note that $C \notin \I \contract e$ as the contrary would imply $C+e \in \I$ and hence, $C \in \I$.
			To show that $C$ does not contain any $f$ outside of a circuit of $M \contract e$ with other items from $C$, we show that
			\[
				\rank_{M \contract e}(C) = \rank_{M \contract e}(C-f)
			\]
			for all $f \in C$.
			Suppose not, i.e. $\rank_{M \contract e}(C) = \rank_{M \contract e}(C-f) + 1$ for some $f \in C$.
			Then using the identity from Proposition~3.1.6 in \cite{oxley2006matroid}
			\begin{equation*}
				\rank_{M \contract Z}(X) = \rank(X \cup Z) - \rank(Z)
			\end{equation*}
			we also have that
			\begin{alignat*}{3}
				&& \rank(C+e) - \rank(e) &= \rank(C-f+e) - \rank(e) + 1\\
				\iff\ && \rank(C+e) &= \rank(C-f+e) + 1,
			\end{alignat*}
			which implies with submodularity and $|\{f\}| = 1$ that $\rank(C) = \rank(C-f) + 1$ and hence, $C \notin \C$.\qedhere
	\end{enumerate}
\end{proof}

The second folklore fact is mentioned (in its dual version) at the end of Section~39.3 in \cite{schrijver2003combinatorial} but without proof.

\cocircuitsBeforeDeletion*
\begin{proof}
	Clearly, at most one can be true.
	As $C^\ast \in \C^\ast(M \delete e)$, we know that $C^\ast \cap B \neq \emptyset$ by definition for all $B \in \B(M \delete e)$.
	Thus, $E \setminus C^\ast$ does not contain a base of $M \delete e$.
	Also $E \setminus (C^\ast+e)$ does not contain a base of $M$:
	Suppose it would, i.e. $B \subseteq E \setminus (C^\ast+e)$ with $B \in \B$, then $B \in \I$ and hence, $B \in \I(M \delete e)$ as $e \notin B$.
	Now assume $E \setminus (C^\ast \setminus T)$ does not contain a base $B$ for some maximal $\emptyset \subset T \subseteq C^\ast$, making $\tilde{C}^\ast \coloneqq C^\ast \setminus T \subset C^\ast$ a cocircuit in $M \delete e$.
	But then $\tilde{C}^\ast \in \C^\ast(M \delete e)$, which contradicts the fact that $C^\ast \in \C^\ast(M \delete e)$.
\end{proof}

Finally, we give a short proof of Dawson's Lemma \cite{dawson1980optimal} using a well-known theorem by Brualdi \cite{brualdi1969comments}.

\augmentationByCocircuits*
\begin{proof}
	Let $\hB$ be a base of maximum weight with $I \subseteq \hB$ with respect to weight function $w$.
	If $e \in \hB$, we are already done.
	Otherwise, choose $B_0 \in \B \contract I$ with $B_0 \cap C^\ast = \{e\}$.
	Such a $B_0$ always exists as $C^\ast$ is by definition an inclusion-wise minimal set in $E \setminus I$ that intersects every base of $M \contract I$.
	Now construct a base of $M$ by taking $B = B_0 \cup I$ and observe that $B \cap C^\ast = \{e\}$ holds as well (as $C^\ast \subseteq E \setminus I$), which means that $C^\ast$ is also a cocircuit of $M$.
	Since $e \in B \setminus \hB$, there exists some $f \in \hB \setminus B$ such that both, $\hB-f+e \in \B$, and $B-e+f \in \B$ (this is the well-known \emph{strong base exchange} property of matroids as introduced by \cite{brualdi1969comments}).
	Note that $f \notin I$, since $f \in \hB \setminus B \subseteq \hB \setminus I$.
	Then it also follows that $f \in C^\ast$, as otherwise $(B_0-e+f) \cap C^\ast \subseteq (B-e+f) \cap C^\ast = \emptyset$, contradicting the fact that $C^\ast$ is a cocircuit of $M$ and $M \contract I$.
	Since $\hB$ is a base of maximum weight, the weight of base $\hB-f+e$ cannot be larger; thus, $w(e) \leq w(f)$.
	However, since $e, f \in C^\ast$ and $e$ is max-weight element of $C^\ast$, we have $w(e) = w(f)$), implying that $\hB-f+e$ is a max-weight base containing $I+e$.
\end{proof}

\section{Discussion on Incentives}\label{apx:not-dominant}
After reading Theorem~\ref{thm:main-theorem}, one might think that together with the VCG Theorem it is implied that the presented auction is indeed \emph{strategy-proof}, i.e. that telling the auctioneer the truth at all times is a dominant strategy.
Before we argue that this is in fact not the case, let us first look at a sealed-bid proxy auction (cf. the Revelation Principle \cite{myerson1981optimal}):
at the beginning of the auction, all buyers submit a bid $b_i\colon E_i \to \mathbb{R}_+$ to a proxy buyer (e.g. the auctioneer himself), who does not know whether $b_i$ is truthful or not.
Then the proxy buyers answer all the requests of the auctioneer in Lines~\ref{alg:ascending-auction:critical} and~\ref{alg:ascending-auction:best} of the \ref{alg:ascending-auction} consistently with $b_i$.
As we have shown, the \ref{alg:ascending-auction} will now return an optimal allocation (w.r.t. the $b_i$s) and Vickrey prices.
By the VCG Theorem, the dominant strategy for each buyer is to submit her true valuations to her proxy buyer.
The proxy auction is sketched in Figure~\ref{fig:proxy-auction}.

\begin{figure}
	\centering
	\begin{tikzpicture}[>=latex']
		\draw[-] (0, 0) rectangle (5, 3);
		\draw[-] (2, .5) rectangle node[align=center,text width=2cm]{\small\ref{alg:ascending-auction}} (4.5, 2.5);
		
		\node[draw,circle,inner sep=0pt,minimum size=5pt] (p1) at (.5, 2.75) {};
		\node[draw,circle,inner sep=0pt,minimum size=5pt] (p2) at (.5, 2.25) {};
		\node at (.5, 1.25) {$\vdots$};
		\node[draw,circle,inner sep=0pt,minimum size=5pt] (pn) at (.5, .25) {};

		\draw[->] (-.5, 2.75) node[left]{$b_1$} -- (p1);
		\draw[->] (-.5, 2.25) node[left]{$b_2$} -- (p2);
		\node at (-.75, 1.25) {$\vdots$};
		\draw[->] (-.5, .25) node[left]{$b_n$} -- (pn);

		\draw[->] (p1) to[bend left=15] (2, 2.25);
		\draw[->] (2, 2.25) to[bend left=15] (p1);
		\draw[->] (p2) to[bend left=15] (2, 1.9);
		\draw[->] (2, 1.9) to[bend left=15] (p2);
		\draw[->] (pn) to[bend left=15] (2, .75);
		\draw[->] (2, .75) to[bend left=15] (pn);

		\draw[->] (4.5, 2) -- (5.5, 2) node[right]{$\hB$};
		\draw[->] (4.5, 1) -- (5.5, 1) node[right]{$\hp$};
	\end{tikzpicture}
	\caption{Sketch of the proxy auction.}
	\label{fig:proxy-auction}
\end{figure}
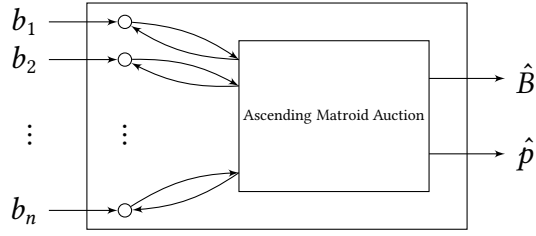

The key difference between the ascending auction and the described proxy auction is that in the latter the buyers are submitting a valuation (truthful or not) and stick with it for the entire auction, while in the former they can in principle answer requests of the auctioneer in a way that is inconsistent in the sense that the signals cannot be represented by any valuation function.
This kind of strategic behavior could be beneficial to a buyer because she can react to events during the auction.
For an example of such inconsistent signaling, see Figure~\ref{fig:inconsistent-signal}.

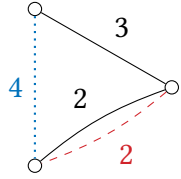
\begin{figure}
	\centering
	\begin{tikzpicture}
		\node[vertex] (1) at (0:1.2) {};
		\node[vertex] (2) at (120:1.2) {};
		\node[vertex] (3) at (240:1.2) {};
		\draw[player1] (1) -- node[above right]{$3$} (2);
		\draw[player1] (1) to[bend right=10] node[above left]{$2$} (3);
		\draw[player2] (1) to[bend left=10] node[below right]{$2$} (3);
		\draw[player3] (2) -- node[left]{$4$} (3);
	\end{tikzpicture}
	\caption{
		At price $p = 2$ two items are (truthfully announced) as critical: 
		one by buyer~1 (black, solid), one by buyer~2 (red, dashed).
		If buyer~2's item gets deleted first by the auctioneer, player~1 has a monopsony and could now announce her critical item as best, which would be inconsistent with the information that her other item is not critical.
	} 
	\label{fig:inconsistent-signal}
\end{figure}

The following example illustrates that a dominant strategy might not exist.
\begin{example*}
	Consider the graphic matroid depicted in Figure~\ref{fig:example-dominance} and two buyers $N = \{1, 2\}$ where $E_i = \{e^i, f^i\}$ for both $i \in N$.
	Let $v_1 = (2, 4)$ and $v_2 = (3, 3)$ (where $v_i = (v_i(e^i), v_i(f^i))$).
	Suppose buyer~2 always signals truthfully (call this strategy $\sigma_2$), then buyer~1 also could signal truthfully by the VCG Theorem (call this strategy $\sigma_1$).
	However, there is a second strategy, let us call it $\sigma_1'$, that yields the same maximal payoff, which is signaling at price $p=1$ already that $e_1$ is critical.
	Note that these two strategies are the only best responses of buyer~1 against truthful signals of buyer~2, thus, if there is a dominant strategy it has to be $\sigma_1$ or $\sigma_1'$.
	On the other hand, buyer~2 could incentivize $\sigma_1'$ by in turn declaring the item $f_2$ as critical at $p=2$ if buyer~1 plays according to $\sigma_1'$ (call this strategy $\sigma_2'$).
	Hence, $\sigma_1$ cannot be dominant as it is not a best response against $\sigma_2'$.
	But buyer~2 could also act spiteful towards buyer~1 by instead only announcing item $f_2$ as critical at $p=4$ if buyer~1 plays strategy $\sigma_1'$ but playing truthful otherwise (call this strategy $\sigma_2''$).
	Then $\sigma_1$ is a better response against $\sigma_2''$ than $\sigma_1'$, so $\sigma_1'$ is not dominant either.

	In essence, the problem for buyer~1 is that she does does not know for her first decision point whether buyer~2's reaction will be altruistic or spiteful towards her decision, which makes neither of her possible strategies dominant.
\end{example*}
\begin{figure}
	\centering
	\begin{tikzpicture}
		\node[vertex] (1) at (0, 0) {};
		\node[vertex] (2) at (3, 0) {};
		\node[vertex] (3) at (6, 0) {};
		\draw[player2] (1) to[bend left=40] node[above]{$2$} node[below]{$e^1$} (2);
		\draw[player2] (2) to[bend left=40] node[above]{$4$} node[below]{$f^1$} (3);
		\draw[player3] (1) to[bend right=40] node[below]{$3$} node[above]{$e^2$} (2);
		\draw[player3] (2) to[bend right=40] node[below]{$3$} node[above]{$f^2$} (3);
	\end{tikzpicture}
	\caption{Graphic matroid with two buyers 1 (red, dashed) and 2 (blue, dotted) where signaling truthfully is not a dominant strategy for buyer~1.}
	\label{fig:example-dominance}
\end{figure}
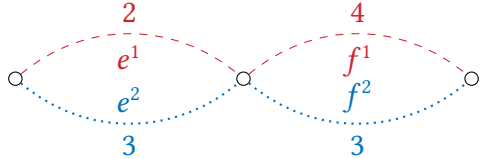

\section{Transparency, Privacy, and Communication Cost}\label{apx:tpc}
As mentioned in the introduction, the adventages of ascending auction---generally speaking---are that the process is more transparent, buyers only need to provide the information that is actually needed to compute the outcome, and the total number of bits that has to be communicated might be lower.
We want to briefly comment on whether the \ref{alg:ascending-auction} fulfills these promises.

In a sealed-bid auction, the auctioneer is the only agent that has full information.
If he sells an item $e$ to a buyer at price $p$, then there should be another buyer who dropped out at price $p$ to justify the price.
It is easy to entertain the idea that an auctioneer with the secret objective of revenue maximization simply can introduce a shill bid to drive up the price after he determined the highest bidder per cocircuit (or just claim that there was another buyer bidding just below the highest bid).
On the other hand, in the \ref{alg:ascending-auction} we can assume that the communication is public, so every buyer can observe by herself whether she is a monopsonist or not.
There is no opportunity for the auctioneer to introduce a shill without risking (we assume that the auctioneer has zero knowledge about the buyers' valuations) that this shill bidder might win an item which would yield zero revenue.
Recently Komo et al. \cite{komo2024shill} investigated which auctions are \emph{shill-proof}, i.e. which auctions keep the auctioneer honest (instead of the buyers, when we discuss strategy-proofness).

Regarding privacy, Milgrom and Segal \cite{milgrom2020clock} adapted the privacy model from social choice theory by Brandt and Sandholm \cite{brandt2005unconditional} for auctions.
In sealed-bid auctions, the auctioneer would just examine all bids to determine allocation and prices (in our case, determine the highest and second-highest bid per cocircuit).
However, this is not really necessary; the auctioneer needs to know the second highest bid in a cocircuit to determine the price but not really the amount of the highest bid.
The information that there is one that is higher than the second highest bid would suffice.
This is what the \ref{alg:ascending-auction} exploits: assuming no ties (for the sake of this argument), the buyer with the highest valuation in a cocircuit will not reveal to the auctioneer what exactly her bid is.
By not reporting any critical item, she just signals to the auctioneer that her bid is still higher than the current price and hence, any potential second-highest bid with that amount.
This property is called \emph{unconditional winner's privacy} in \cite{milgrom2020clock}.

Finally, let us discuss communication cost.
If every buyer has a valuation for every subset of items that she might receive, then in a sealed-bid auction, every buyer would need to communicate an exponential number of values to the auctioneer.
These values do not have to be send expicitly in an ascending auction, they are rather sent implicitly by the signals of the ascending auction, which suffices to compute the outcome.
However, in our model we consider additive valuations and then this argument does not go through (at least not in the worst case) since the submission of an additive valuation only requires $m \log V$ many bits, where $m$ is the number of items and $V$ the maximal valuation over any item.
One can now easily envision a setting where in an ascending auction (with unit increments) the auctioneer alone has to send at least one bit per iteration, which would require a total number of bits that is at least linear in $V$.

\end{document}